\newcommand{\pushright}[1]{\ifmeasuring@#1\else\omit\hfill$\displaystyle#1$\fi\ignorespaces}
\renewcommand\@biblabel[1]{#1.\ }
\def\begquo{\begin{quote}}
	\def\endquo{\end{quote}}
\def\begequarr{\begin{eqnarray}}
	\def\endequarr{\end{eqnarray}}
\def\begequarrs{\begin{eqnarray*}}
	\def\endequarrs{\end{eqnarray*}}
\def\begarr{\begin{array}}
	\def\endarr{\end{array}}
\def\begequ{\begin{equation}}
	\def\endequ{\end{equation}}
\def\lab{\label}
\def\begdes{\begin{description}}
	\def\enddes{\end{description}}
\def\begenu{\begin{enumerate}}
	\def\begite{\begin{itemize}}
		\def\endite{\end{itemize}}
	\def\endenu{\end{enumerate}}
\def\lef[{\left[\begin{array}}
	\def\rig]{\end{array}\right]}
\def\begcen{\begin{center}}
	\def\endcen{\end{center}}
\def\begrem{\begin{remark}\rm}
	\def\endrem{\end{remark}}
\def\begdef{\begin{definition}}
	\def\enddef{\end{definition}}
\def\begpro{\begin{proposition}}
	\def\endpro{\end{proposition}}
\def\begfac{\begin{fact}}
	\def\endfac{\end{fact}}
\def\begass{\begin{assumption}}
	\def\endass{\end{assumption}}
\def\begsubequ{\begin{subequations}}
	\def\endsubequ{\end{subequations}}
\def\begmat#1{\begin{bmatrix}#1\end{bmatrix}}
\def\begali#1{\begin{align}{#1}\end{align}}
\def\begalis#1{\begin{align*}{#1}\end{align*}}
\def\cale{{\cal E}}
\def\cals{{\cal S}}
\def\cale{{\cal E}}
\def\cald{{\cal D}}
\def\L2e{{\cal L}_{2e}}
\def\rea{\mathds{R}}
\def\adj{\mbox{adj}}
\def\col{\mbox{col}}
\def\hal{{1 \over 2}}
\def\min{{\mbox{min}}}
\def\IJACSP{{\it Int. J. on Adaptive Control and Signal Processing}}
\def\TAC{{\it IEEE Trans. Automatic Control}}
\def\TPE{{\it IEEE Trans. Power Electronics}}
\def\SCL{{\it Systems and Control Letters}}
\def\CST{{\it IEEE Trans. Control Systems Technology}}
\def\CSM{{\it IEEE Control Systems Magazine}}
\def\cale{{\cal E}}
\def\hal{{1 \over 2}}
\newcommand{\aTerm}{$a_1a_2 - a_0$}
\journal{Journal}
\begin{document}

\title{Voltage Control of the Boost Converter: PI vs. Nonlinear Passivity-based Control}

\author[1,2]{Leyan Fang}

\author[2]{Romeo Ortega}

\author[3]{Robert Griñó}

\authormark{TAYLOR \textsc{et al.}}
\titlemark{PLEASE INSERT YOUR ARTICLE TITLE HERE}

\address[1]{\orgdiv{Center for Control Theory and Guidance Technology}, \orgname{Harbin Institute of
Technology}, \orgaddress{\state{Harbin}, \country{China}}}

\address[2]{\orgdiv{Departamento Acad\'{e}mico de Ingenier\'ia El\'ectrica y Electr\'onica}, \orgname{ITAM}, \orgaddress{\state{Ciudad de M\'exico}, \country{M\'exico}}}

\address[3]{\orgdiv{Institute of Industrial and Control Engineering}, \orgname{Universitat Politècnica de Catalunya}, \orgaddress{\state{Barcelona}, \country{Spain}}}

\corres{ Corresponding author Leyan Fang, Departamento Acad\'{e}mico de Ingenier\'ia El\'ectrica y Electr\'onica, ITAM, R\'io Hondo 1, 01080,  Ciudad de M\'exico, M\'{e}xico. \email{leyan.fang@itam.mx}}

\abstract[Abstract]{We carry-out a detailed analysis of {\em direct voltage control} of a Boost converter feeding a simple resistive load. First, we prove that using a classical PI control to stabilize a desired equilibrium leads to a very complicated dynamic behavior consisting of two equilibrium points, one of them {\em always unstable} for all PI gains and circuit parameter values. Interestingly, the second equilibrium point may be rendered {\em stable}---but for all tuning gains  leading to an extremely large value of the circuit current and the controller integrator state. Moreover, if we neglect the resistive effect of the inductor, there is only one equilibrium and it is {\em always unstable}. From a practical point of view, it is important to note that the only useful equilibrium point is that of minimum current and that, in addition, there is always a resistive component in the inductor either by its parasitic resistance or by the resistive component of the output impedance of the previous stage. In opposition to this troublesome scenario we recall three nonlinear voltage-feedback controllers, that ensure asymptotic stability of the desired equilibrium with simple gain tuning rules, an easily defined domain of attraction and smooth transient behavior. Two of them are  {\em very simple, nonlinear, static} voltage feedback rules, while the third one is a variation of the PID scheme called PID-Passivity-based Control (PBC). In its original formulation PID-PBC requires full state measurement, but we present a modified version that incorporates a {\em current observer}. All three nonlinear controllers are designed following the principles of PBC, which has had enormous success in many engineering applications. }

\keywords{{Nonlinear control, Passivity-based control, PI control, Power converters}}

\maketitle
\renewcommand{\thefootnote}{\arabic{footnote}}

\section{Introduction}\label{sec1}

The ever increasing demand for smaller size, portable, and lighter weight high performance DC-DC power converters for industrial, communications, residential,
and aerospace applications, has made the  control of these devices a topic of widespread interest, see \cite{ROSetal} for a recent survey. The {most} popular configuration for this kind of power converters is the so-called Boost converter, which provides low voltage amplification and current ratings for loads at constant switching frequency. Various attempts have been made to formulate suitable control strategies for this switching device, because its overall dynamic performance is largely determined by the controller. The task of designing a controller is complicated by the fact that the mathematical model of the Boost converter is a {\em bilinear} system, and its linear approximation yields very poor approximations to its behavior. An additional difficulty in the control of Boost converters is that the output signal, that is the voltage fed to the load, has the behavior of a {\em non-minimum phase} system \cite{ISIbook}. Due to this behavior the vast majority of the practical controllers designed for the Boost converter are of the {{\em indirect} type,} where a PI loop is placed around the current, whose reference is determined to match the desired voltage value. This approach is clearly extremely fragile as the derivation of the current reference requires the exact knowledge of the system parameters or the design of an outer control loop for the converter output voltage, which increases the complexity of the whole controller. This issue has been extensively discussed in the literature, see {\em e.g.} \cite[Section 4.3.A]{ORTetalbookelsys} and references therein.

In this article we analyze in detail the dynamic behavior of the classical boost converter in closed-loop with {\em voltage-fed} control. It is shown in the paper that there are two important factors that significantly affect the control task. First, the aforementioned non-minimum phase behavior. Second the presence (or absence) of a parasitic resistor in the input inductor. In spite of all the research carried out to study the non-minimum phase behavior of the Boost converter, to the best of our knowledge, a detailed discussion of the zero dynamics, without appealing to system linearization, has not been carried out, in particular for the case when the aforementioned resistance is taken into account. \footnote{It should be noted that this resistive effect is due not only to the parasitic resistance in the inductor, but it appears also when we incorporate into the model the DC voltage source feeding the device---the so-called external series resistor.}  As we will see in the paper the control scenario dramatically changes when this resistor is taken into {account} in the model.

We concentrate our attention on two types of controllers wrapped around the voltage signal. First, the well-known PI controller that, as mentioned above, is rarely used in practical applications. A detailed study of this scenario provides a definite answer on when (and when note) this configuration can be used in practice. The second type of controllers is based on the principle of {\em passivity}---the so-called {\em passivity-based controllers} (PBC) \cite{ORTetalcsm,ORTetalaut02,SANVER,VANbook}. We study two recently developed simple, static nonlinear voltage-fed PBC, and vividly illustrate their excellent dynamic performance. Finally, we also consider a variation of PID, called PID-PBC, that has attracted a lot of interest in recent years \cite{ORTetalbookpid}.

The rest of the paper has the following structure. The mathematical model of the Boost converter and some of its control properties are presented in Section \ref{sec2}. The first main result of the paper---that is, the analysis of the voltage-fed PI control---is given in Section \ref{sec3}. In Section \ref{sec4} we analyze the PBCs and wrap-up the paper with some concluding remarks in Section \ref{sec5}. In Appendix \ref{appa} we present some cumbersome calculations needed to prove one of the main results.

\section{Boost Dynamic Model and Some Properties}\label{sec2}
In this section we provide some background material to study the dynamic behavior of the classical boost converter in closed-loop with {\em voltage-fed} controllers. First, to simplify the analytic expressions we start by proposing a suitable change of coordinates and time scaling. Second, we put into evidence the property of power balance of the converter. Thirdly, we compute the assignable equilibrium set \cite[Proposition B.1]{ORTetalbookpid} and, finally, we reveal the well-known fact that the {\em zero dynamics} of the converter, with respect to the voltage variable, is {\em non-minimum phase}.  
\subsection{The boost converter scaled model}
\lab{subsec21}
%
Consider the (average) model of the boost converter
\begin{subequations}\label{boost}
\begin{align}
	L\frac{ i_L}{dt} =& -R i_L- v_c u +E  \\
	C\frac{v_c} {dt} =& -Gv_C + i_L u  
\end{align}
\end{subequations}
where  $i_L(t) \in \rea$ is the inductor current, $v_C(t) \in \rea$ is the capacitor voltage and $u(t) \in \rea$ is the duty cycle. Moreover, $L>0$ and $C>0$ are the circuit inductance and capacitance, $R \geq 0$ is the inductor (parasitic) resistance, $G>0$ is the load conductance and $E>0$ is converter constant voltage source.\footnote{Equations (1) represent the dynamics of a bidirectional in current (and hence in power) boost converter or a unidirectional in current boost converter in its continuous conduction mode (CCM).}

\begin{fact}
	Under the following state variable change and time scaling 
		$$
		\begin{bmatrix}
			x_1\\x_2
		\end{bmatrix}= \begin{bmatrix} \frac{1}{E}\sqrt{\frac{L}{C}}& 0 \\ 0 & \frac{1}{E}\end{bmatrix}\begin{bmatrix} i_L\\v_c\end{bmatrix}, \tau =\frac{1}{\sqrt{LC}}t ,
	$$
	the system \eqref{boost} admits the standard $(f,g,h)$ representation 
	\begali{
	\nonumber
		\dot x& = \begmat{  -d_1 x_1+1\\ -d_2 x_2 } +\begmat{ -x_2\\x_1}u\\
		y&=x_2,
	\label{sys}
	}
	where $\dot x(\tau):= \frac{dx(\tau)}{d\tau }$, $d_1:=R\sqrt{\frac{C}{L}} \geq 0$ and $d_2:= G\sqrt{\frac{L}{C}}>0$.
\end{fact}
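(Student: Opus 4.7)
The plan is to perform a direct substitution of the change of coordinates and time scaling into the original equations \eqref{boost}, then simplify to recover the advertised normalized form.

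First I would invert the linear state transformation to express the physical variables in terms of the new ones, namely $i_L = E\sqrt{C/L}\,x_1$ and $v_c = E\,x_2$, and translate the derivatives using the chain rule for the time rescaling $\tau = t/\sqrt{LC}$, which gives $\tfrac{d}{dt} = \tfrac{1}{\sqrt{LC}}\tfrac{d}{d\tau}$. Combining these two substitutions produces simple expressions for $\tfrac{di_L}{dt}$ and $\tfrac{dv_c}{dt}$ as scalar multiples of $\dot{x}_1$ and $\dot{x}_2$, respectively.

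Next I would substitute these expressions into the two equations of \eqref{boost}. In the inductor equation the prefactor $L$ and the overall factor $E$ collapse, leaving $\dot{x}_1 = -R\sqrt{C/L}\,x_1 + 1 - x_2 u$, from which the coefficient of $x_1$ is identified as $-d_1$. In the capacitor equation, dividing through by $E\sqrt{C/L}$ yields $\dot{x}_2 = -G\sqrt{L/C}\,x_2 + x_1 u$, so the coefficient of $x_2$ is identified as $-d_2$. Rearranging into a drift term plus a control-affine term, and setting the output as $y = x_2$, delivers the claimed $(f,g,h)$ representation.

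The only step that requires some care is tracking how the time-scaling factor $\sqrt{LC}$ interacts with the state-rescaling factor $\sqrt{L/C}$, so that everything collapses and the surviving dimensionless parameters are exactly $d_1 = R\sqrt{C/L}$ and $d_2 = G\sqrt{L/C}$. Since no equilibrium analysis, stability argument, or non-trivial algebraic identity is invoked, this is essentially a bookkeeping exercise and I do not anticipate any substantive obstacle; the main value of the statement is to fix the normalization that will be used throughout the rest of the paper.
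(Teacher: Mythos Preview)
Your proposal is correct and is exactly the straightforward verification one would expect; the paper itself states this result as a Fact without giving any proof, so your direct-substitution computation is precisely the intended (and only reasonable) argument.
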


\begrem
\lab{rem1}
Notice that we are considering the possibility where the resistance $R=0$, which is a case that has often been studied. This is, however, not the case when the parasitic resistance of the inductor is considered, or the output impedance of the DC voltage source, or the previous stage, includes a resistive part.
\endrem 
\subsection{Power balance property}
\lab{subsec22}
%
Clearly, the system \eqref{boost} admits the port-Hamiltonian (pH) representation \cite{VANbook} 
	\begin{align}
	\label{phsys}
		\dot x =& F(u) \nabla H(x)+\begmat{ 1\\ 0 },
	\end{align}  
with total energy function $H(x)=\hal|x|^2$ and the interconnection and damping matrix
\begequ
\lab{f}
F(u):=\begmat{-d_1 & -u \\ u & -d_2}.
\endequ

The interest of the pH representation is that it puts in evidence the {\em power balance} property
$$
\underbrace{\dot H}_{\mbox{stored}}=\underbrace{-d_1 x_1^2-d_2 x_2^2}_{\mbox{dissipated}}+\underbrace{x_1 \times 1,}_{\mbox{supplied\; power}}
$$
which is a property satisfied by all physical systems. Notice that in the scaled model the supplied voltage $E$ is reduced to $1$---and the quantity $x_1 \times 1$ is the product of supplied voltage and extracted current. The current $x_1$ is the so-called ``natural output" $y=g^\top \nabla H$ \cite{VANbook}.

\begrem
\lab{rem2}
One important corollary stemming from the power balance equation is that the control action, similarly to the Coriolis forces in mechanical systems, ``doesn't do any work". Hence there are some stability properties of the system which are {\em independent} of the control action. 
\endrem 
\subsection{Assignable equilibrium set}
\lab{subsec23}
%
Now, applying \cite[Proposition B.1]{ORTetalbookpid}, we compute the {\em assignable equilibrium set} and the corresponding---uniquely defined---constant control value.

\begin{fact} 
\lab{fac2}
	Fix the desired equilibrium value of $x_2$ as $y_\star  >0$. If $d_1=0$ there is an {\em unique} assignable  equilibrium point given by
	\begequ
	\lab{assequ0}
	\cale_0=\{\bar x \in \rea^2\;\;|\;\bar x_1=d_2y_\star ^2,\;\bar x_2=y_\star \}, 
	\endequ
with the associated constant control value
$$
\bar u={1 \over y_\star }.
$$

If $d_1>0$ the corresponding assignable equilibrium set consists of {\em two} points given by
	\begin{align}
	\cale=\{\bar x \in \rea^2\;\;|\;\bar x_1={1 \over 2d_1}\Big(1 \pm \sqrt{1 -4d_1d_2y_\star ^2}\Big),\;\bar x_2=y_\star \}, 
	\label{assequ}
	\end{align}
with the associated constant control value
$$
\bar u={-1 \over \bar x_1^2+y_\star ^2}[(d_1 - d_2)\bar x_1-1]y_\star .
$$
\end{fact}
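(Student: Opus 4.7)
The statement is a direct equilibrium computation, so the plan is essentially algebraic. Setting $\dot x_1=\dot x_2=0$ in \eqref{sys} and fixing $\bar x_2 = y_\star$ yields the two scalar conditions
\begin{align*}
1 - d_1 \bar x_1 - y_\star \bar u &= 0, \\
-d_2 y_\star + \bar x_1 \bar u &= 0.
\end{align*}
My plan is to eliminate $\bar u$ between these two equations to obtain a polynomial in $\bar x_1$ alone, split the analysis on whether the leading coefficient $d_1$ vanishes, and finally back out the formula for $\bar u$ in the compact form stated in the lemma.

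First I would use the second equation to write $\bar u = d_2 y_\star /\bar x_1$ (noting that $\bar x_1=0$ is inadmissible because then the second equation forces $y_\star=0$, contradicting $y_\star>0$). Substituting this into the first equation and clearing the denominator gives the quadratic
\begin{equation*}
d_1 \bar x_1^2 - \bar x_1 + d_2 y_\star^2 = 0.
\end{equation*}
When $d_1=0$ this degenerates to $\bar x_1 = d_2 y_\star^2$, so $\cale_0$ in \eqref{assequ0} is recovered together with $\bar u = d_2 y_\star/(d_2 y_\star^2) = 1/y_\star$. When $d_1>0$ the quadratic formula immediately produces the two roots displayed in \eqref{assequ}; no further work is needed to characterize the set.

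The only slightly non-routine step is rewriting $\bar u$ in the symmetric form stated in the lemma rather than as $\bar u = d_2 y_\star/\bar x_1$. To do this I would not substitute the explicit root; instead I would take a suitable linear combination of the two equilibrium equations, namely multiply the second by $\bar x_1$ and the first by $y_\star$ and add them. This yields
\begin{equation*}
(\bar x_1^2 + y_\star^2)\,\bar u \;=\; y_\star\bigl[(d_2-d_1)\bar x_1 + 1\bigr],
\end{equation*}
and since $\bar x_1^2+y_\star^2 > 0$ the claimed formula follows at once by solving for $\bar u$.

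The only place a reader might worry is uniqueness of $\bar u$ at each equilibrium, but this is immediate: for $y_\star>0$ the vector $\begmat{-y_\star \\ \bar x_1}$ is never zero, so the input map in \eqref{sys} has rank one at every candidate equilibrium and the constant control value associated with a given $\bar x$ is unique. I do not anticipate any genuine obstacle; the proof is a textbook application of \cite[Proposition B.1]{ORTetalbookpid} to the specific vector fields of \eqref{sys}.
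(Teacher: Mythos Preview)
Your proposal is correct and is essentially the same computation the paper has in mind: the paper does not spell out a proof of this Fact but simply invokes \cite[Proposition B.1]{ORTetalbookpid}, which amounts precisely to setting $\dot x=0$ in \eqref{sys}, eliminating $\bar u$, and solving the resulting quadratic in $\bar x_1$. Your derivation of the symmetric expression for $\bar u$ via the linear combination of the two equilibrium equations is a clean way to recover the stated formula without substituting the explicit roots, and the rank argument for uniqueness of $\bar u$ is exactly the content of that cited proposition.
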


\begrem
\lab{rem3}
It is clear from \eqref{assequ} that, in order to ensure the existence of an equilibrium when $d_1 >0$ it is necessary and sufficient to satisfy the constraint
\begequ
\lab{equcon}
d_1d_2 < {1 \over 4y_\star ^2}.
\endequ
Notice that this constraint translates in the original coordinates into 
$$
RG < {E^2 \over 4(v_c^*)^2},
$$
where $v_c^*>0$ is the desired voltage value. We bring to the readers attention the fact that we rule-out the possibility of $d_1d_2 = {1 \over 4y_\star ^2}$, which would correspond to the current been equal to ${1 \over 2 d_1}$---which is not practically relevant. 
\endrem
\subsection{Zero dynamics of the boost converter with respect to the voltage}
\lab{subsec24}
%
It is well-known \cite{ORTetalbookelsys,SIRetal} that---when $d_1=0$---the {\em zero dynamics} with respect to the output voltage of the Boost converter,  is {\em non-minimum phase} \cite{ISIbook}. It is, precisely, this fact that motivates {most} practitioners to control the converter via the so-called {\em indirect method}, where a PI loop is placed around the current, whose reference is determined to match the desired voltage value. This approach is clearly extremely fragile as the derivation of the current reference requires the exact knowledge of the system parameters or the design of an outer control loop for the converter output voltage, which increases the complexity of the whole controller. This issue has been extensively discussed in the literature, see {\em e.g.} \cite[Section 4.3.A]{ORTetalbookelsys} and references therein. 

In spite of all this research, to the best of our knowledge, a detailed discussion of the zero dynamics of the Boost converter, without appealing to system linearization, has not been carried out, in particular for the case when $d_1>0$. To alleviate this deficiency is one of the motivations of this subsection. 

As is well-known, the zero-dynamics of a nonlinear system may be computed calculating its {\em normal form} \cite{ISIbook} or, alternatively, computing the differential equation satisfied by the system {\em input}, when its output is set to a constant equilibrium value---see \cite[Section 4.3.A]{ORTetalbookelsys} for the computation, via the second method, for the Boost converter. The lemma below exhibits its zero dynamics for the more general case when $d_1>0$. To the best of the authors' knowledge, this is the first time this result is reported.

\begin{lemma} 
\lab{lem1}
Consider the scaled version of the dynamics of the Boost converter \eqref{sys}. 
\begenu[{\bf F1}]
\item It's zero dynamics, with respect to the constant output $x_2=y_\star $ is given by
\begequ
\lab{zerdyn}
\dot u ={u \over d_2}\Big(u^2-{1 \over y_\star }u+d_1d_2\Big).
\endequ
\item This dynamics will have an equilibrium different from zero if and only if the condition \eqref{equcon} holds.
\item This equilibrium is {\em unstable}.
\endenu
\end{lemma}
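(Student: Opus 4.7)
The plan is to dispatch the three claims in order, with only the stability interpretation in F3 requiring genuine care.

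For F1 I would use the standard ``inverse--input'' derivation of the zero dynamics. Holding $x_2 \equiv y_\star$ forces $\dot x_2 = 0$ in \eqref{sys}, which gives the algebraic constraint $x_1 u = d_2 y_\star$ and hence $x_1 = d_2 y_\star / u$ on $\{u \neq 0\}$. Differentiating this once yields $\dot x_1 = -(d_2 y_\star/u^2)\,\dot u$; substituting this together with $x_2 = y_\star$ into the first row of \eqref{sys} and solving for $\dot u$ produces \eqref{zerdyn} after elementary algebra. This step is purely mechanical.

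For F2, the nonzero equilibria of \eqref{zerdyn} are exactly the roots of the quadratic factor $u^2 - u/y_\star + d_1 d_2 = 0$. These are real iff the discriminant $1/y_\star^2 - 4 d_1 d_2$ is nonnegative; Remark~\ref{rem3} removes the equality case, leaving precisely the condition \eqref{equcon}. The two resulting roots are
\[
\bar u_\pm = \frac{1}{2 y_\star}(1 \pm s),\qquad s := \sqrt{1 - 4 d_1 d_2 y_\star^2} \in (0,1),
\]
and a quick cross--check via $\bar u = d_2 y_\star/\bar x_1$ shows that they correspond to the two elements of $\cale$ in \eqref{assequ}, as they must.

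For F3 I would linearize the scalar vector field $f(u) := (u/d_2)(u^2 - u/y_\star + d_1 d_2)$ at the equilibrium. Since the quadratic factor vanishes at $\bar u_\pm$, the product rule reduces the computation to $f'(\bar u) = (\bar u/d_2)(2\bar u - 1/y_\star)$; plugging in $\bar u_\pm$ gives $f'(\bar u_\pm) = \pm s(1 \pm s)/(2 d_2 y_\star^2)$. The $+$ sign makes this strictly positive, which yields Lyapunov instability by the standard one--dimensional criterion. The one delicate point---and the main obstacle I anticipate---is the bookkeeping that identifies ``this equilibrium'' in the statement with $\bar u_+$ rather than $\bar u_-$. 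Because $\bar u = d_2 y_\star/\bar x_1$, the larger duty cycle $\bar u_+$ is the one paired with the \emph{smaller} current root $\bar x_1 = (1-s)/(2 d_1)$ in \eqref{assequ}, i.e.\ with the practically relevant minimum--current operating point highlighted in the abstract and in Fact~\ref{fac2}. The second root $\bar u_-$ actually satisfies $f'(\bar u_-) < 0$ and is thus locally asymptotically stable, but it corresponds to the impractical large--current equilibrium; I would flag this in a short remark so the lemma is not misread as asserting instability of both branches.
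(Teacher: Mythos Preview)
Your proposal is correct and, for F1 and F2, matches the paper's argument essentially verbatim: the paper obtains \eqref{zerdyn} by ``differentiating $x_2$ twice and setting $x_2=y_\star$'', which after using $\dot x_2=0$ to eliminate $x_1$ is exactly your constraint-plus-substitution computation, and for F2 both of you reduce to the discriminant of the quadratic factor. For F3 the routes diverge slightly. The paper argues qualitatively: it notes that the parabola $w(u)=u^2-u/y_\star+d_1d_2$ has its minimum at $u_{\min}=1/(2y_\star)$ with $w(u_{\min})<0$ under \eqref{equcon}, and then appeals to a phase-line sketch of $u\mapsto u\,w(u)$ (Fig.~\ref{fig1}) to read off instability. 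Your explicit linearization $f'(\bar u_\pm)=\pm s(1\pm s)/(2d_2 y_\star^2)$ is a cleaner and fully analytic substitute that yields the same conclusion without the picture; it also makes transparent the point the paper leaves implicit, namely that only the larger root $\bar u_+$ (paired with the minimum-current element of $\cale$) is unstable, while $\bar u_-$ is locally asymptotically stable. Your closing remark flagging this asymmetry is a genuine clarification of the lemma as stated.
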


\begin{proof}
The differential equation \eqref{zerdyn} is obtained differentiating $x_2$ twice and setting $x_2=y_\star $. 

Some simple calculations show that the term in parenthesis is equal to zero---that is, there is a second non-zero equilibrim---if and only if the condition \eqref{equcon} holds. To prove the instability of this equilibrium we study the polynomial
$$
w(u):=u^2-{1 \over y_\star }u+d_1d2,
$$  
which has a minimum at $u_\min={1 \over 2y_\star }$. It is easy to see that the non-zero equilibrium will be unstable if $w(u_\min)<0$, which is the case when \eqref{equcon} holds. In Fig \ref{fig1} we show the plot of the right hand side of \eqref{zerdyn} for two different cases.

 \begin{figure}
 	\begin{minipage}[h]{0.45\linewidth}
 		\center{\includegraphics[width=1\linewidth]{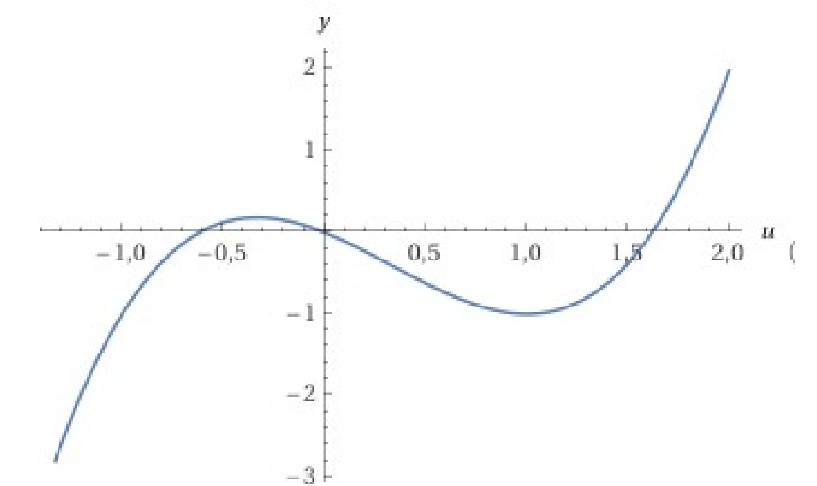}} a) \\
 	\end{minipage}
 	\hfill
 	\begin{minipage}[h]{0.45\linewidth}
 		\center{\includegraphics[width=1\linewidth]{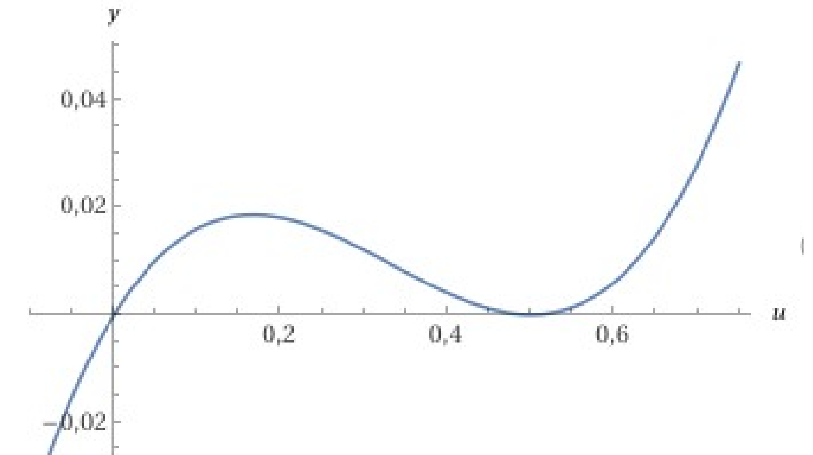}} \\b)
 	\end{minipage}
 	\vfill
 	\caption{Plot of the function $y(u)=u\Big(u^2-{1 \over y_\star }u+d_1d_2\Big)$: a) Case when $d_1d_2 < {1 \over 4y_\star ^2}$; b) Case when $d_1d_2 = {1 \over 4y_\star ^2}$.}
 	\label{fig1}
 \end{figure} 
 
\end{proof} 
\section{Main Stability Results for Voltage-fed PI Control}
\lab{sec3}
%
In this section we carry out the stability analysis of the PI control under the conditions of $d_1=0$ and $d_1>0$. This analysis consist of two parts, first the stability (in the Lyapunov sense) of their equilibria done invoking Lyapunov's Indirect Method \cite[Theorem 4.7]{KHAbook}. Second the determination of a domain of attraction for the case of stable equilibrium.
\subsection{PI control with $d_1=0$: instability of the equilibrium}
\lab{subsec31}
%
In this subsection we do the (local) stability analysis of the unstable equilibrium and present some simulations that illustrate this claim.
\subsubsection{Instability of the equilibrium}
\lab{subsubsec311}
%
\begpro
\lab{pro1}
Consider the system \eqref{sys} {\em with} $d_1=0$ in closed-loop with the {\em PI control}
\begin{equation}
\label{pi}
\begin{aligned}
\dot x_c &=y_\star  -x_2\\
u &= u_0 + K_I x_c+K_P(y_\star  -x_2),
\end{aligned}
\end{equation}
with the PI gains $K_P \geq 0$, $K_I > 0$ and $u_0 \in \rea$ a designer chosen parameter. The closed-loop system has a unique equilibrium point which is {\em unstable} for all settings of the tuning gains.
\endpro

\begin{proof}
 Define the state variables $\chi:=\col(x_1,x_2,x_c)$. The system \eqref{sys} with $d_1=0$ in closed-loop with the PI controller \eqref{pi} is of the form
\begequ
\lab{cloloo0}
\dot \chi=\begmat{1-u_0 \chi_2 -K_I \chi_2 \chi_3-K_P\chi_2y_\star +K_P \chi^2_2\\ -d_2 \chi_2 + u_0 \chi_1+K_I \chi_1 \chi_3+K_P\chi_1 y_\star -K_P\chi_1\chi_2\\-\chi_2+y_\star  }=:f(\chi).
\endequ
Invoking Lyapunov's Indirect Method \cite[Theorem 4.7]{KHAbook}, we will prove the instability claim of {Proposition} \ref{pro1}. Towards this end, we compute the Jacobian of the vector field $f(\chi)$ as
$$
\nabla f(\chi)=\begmat{0 & -u_0 -K_I \chi_3 -K_Py_\star +2K_P \chi_2 & -K_I \chi_2 \\ u_0 + K_I \chi_3 + K_P y_\star - K_P \chi_2  & -d_2 -K_P \chi_1 & K_I \chi_1 \\ 0 & -1 & 0}.
$$
Evaluating the Jacobian at the equilibrium $\bar \chi=\col(\bar \chi_1,y_\star,\bar \chi_3)$ yields
$$
\nabla f(\bar \chi)=\begmat{0 & -u_0  -K_I \bar \chi_3 + K_P y_\star  & -K_I y_\star  \\ u_0 + K_I \bar \chi_3& -d_2  -K_P \bar \chi_1& K_I  \bar \chi_1\\ 0 & -1 & 0}.
$$
The characteristic polynomial of this matrix is of the form $p(\lambda):=\lambda^3+a_2 \lambda^2+a_1 \lambda + a_0$, with 
\begalis{
& a_0:=- K_I(K_I \bar \chi_3 + u_0) y_\star\\
& a_1:= K_I^2\bar \chi_3^2+( 2u_0 \bar \chi_3 + \bar \chi_1-K_P \bar \chi_3 y_\star  )K_I  - u_0 y_\star K_P + u_0^2\\
& a_2:= K_P \bar \chi_1  + d_2.
}  
Based on the Routh-Hurwitz criterion, all roots of the characteristic polynomial will have negative real parts (i.e., system is stable) if and only if 
\begequ
\lab{stacon}
a_0>0,\;a_1>0,\;a_2>0,\;a_2 a_1>a_0.
\endequ
We now evaluate the equilibrium $\bar \chi_3$ as
$$
\bar \chi_3={1 \over K_I y_\star }(1-u_0 y_\star ),
$$
which upon replacement in $a_0$ yields $a_0=-K_I<0$---violating the first condition of \eqref{stacon}. Hence, the equilibrium is {\em unstable} for all circuit parameters and settings of the PI gains.
\end{proof}

\begrem
\lab{rem4}
This result proves that the claim of Proposition 3 (and consequently also the main result of Theorem 2) in \cite{ALVESP} is wrong.
\endrem

\begrem
\lab{rem5}
Notice that, setting $K_P=0$, we also prove that the pure integral control is also unstable for all values of $K_I>0$.
\endrem
\subsubsection{Simulation results}
\lab{subsubsec312}
%
In Fig. \ref{fig2} we present simulations of the closed-loop system \eqref{cloloo0} with different initial conditions for the following parameters:
$$
y_\star =2,d_2=1,K_I=1,K_P=2,u_0=\hal.
$$
From the figure it is clear that the equilibrium $\chi^u$ is, indeed, unstable.
 
It was observed that the surface 
$$
\cals_0:=\{(\chi_1,\chi_2)\;|\;\chi_1-d_2 \chi_2^2=0\},
$$
which clearly contains the assignable equilibrium point given in \eqref{assequ0}, defines a separatrix, in the sense that taking initial conditions ``below" the surface yields trajectories for $(\chi_1(t),\chi_2(t))$ converging to $(0,0)$. On the other hand, with initial conditions ``above" the surface the trajectory of $\chi_1(t)$ diverges to infinity. It is interesting to note that this surface precisely coincides with the place where the derivative of the systems energy $H(x)$ equal zero. Indeed, from \eqref{phsys} it is easy to see that, when $d_1=0$, we have
$$
\dot H=-d_2 x_2^2+x_1.
$$
This is consistent with the fact that ``below" the surface $\dot H<0$ and the energy decreases to zero. Unfortunately, no further analysis of this interesting behavior has been carried out.
  
  \begin{figure}
 	\begin{minipage}[h]{0.4\linewidth}
 		\center{\includegraphics[width=1\linewidth]{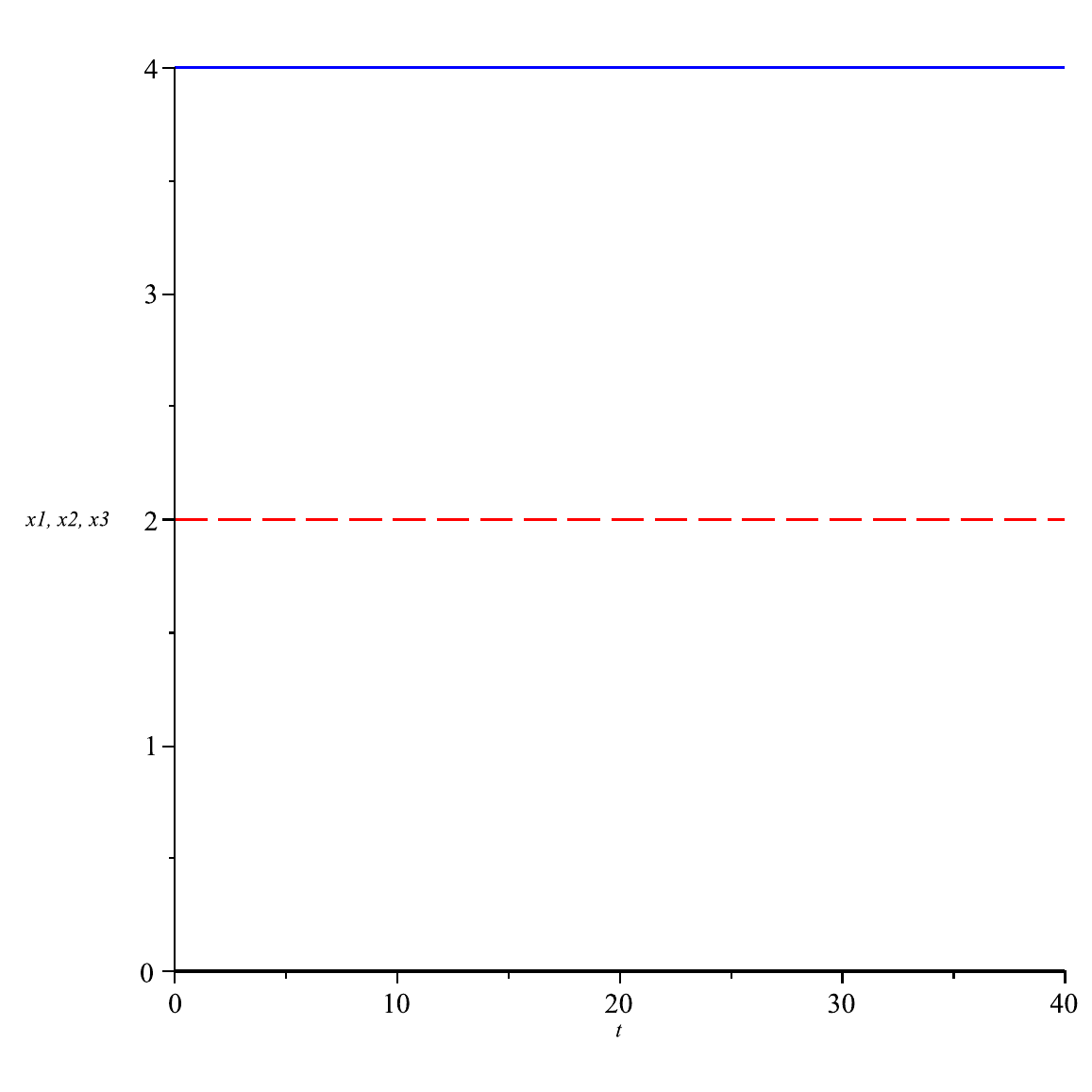}} a) \\
 	\end{minipage}
 	\hfill
 	\begin{minipage}[h]{0.4\linewidth}
 		\center{\includegraphics[width=1\linewidth]{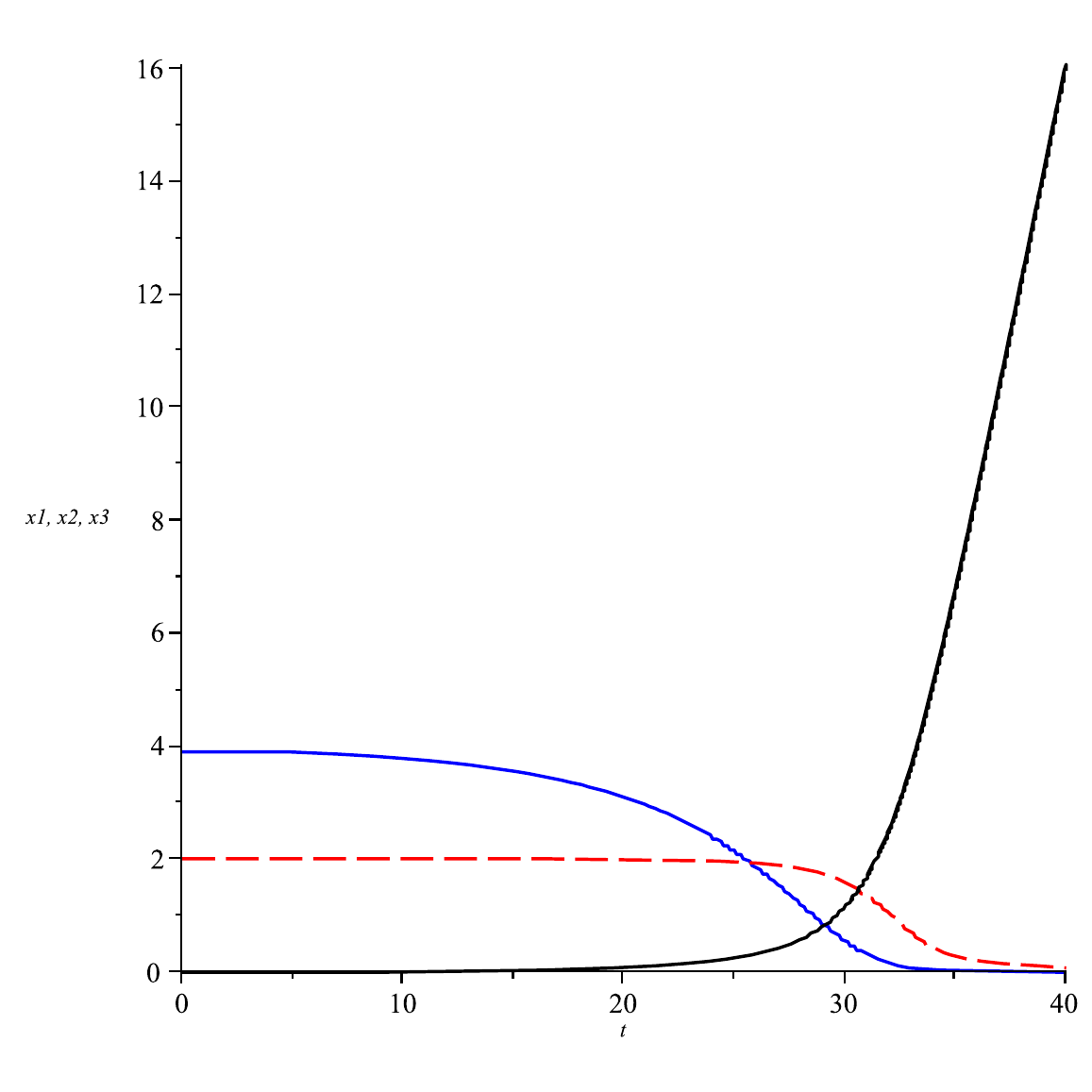}} \\b)
 	\end{minipage}
 	\vfill
 	\begin{minipage}[h]{1\linewidth}
 		\center{\includegraphics[width=0.4\linewidth]{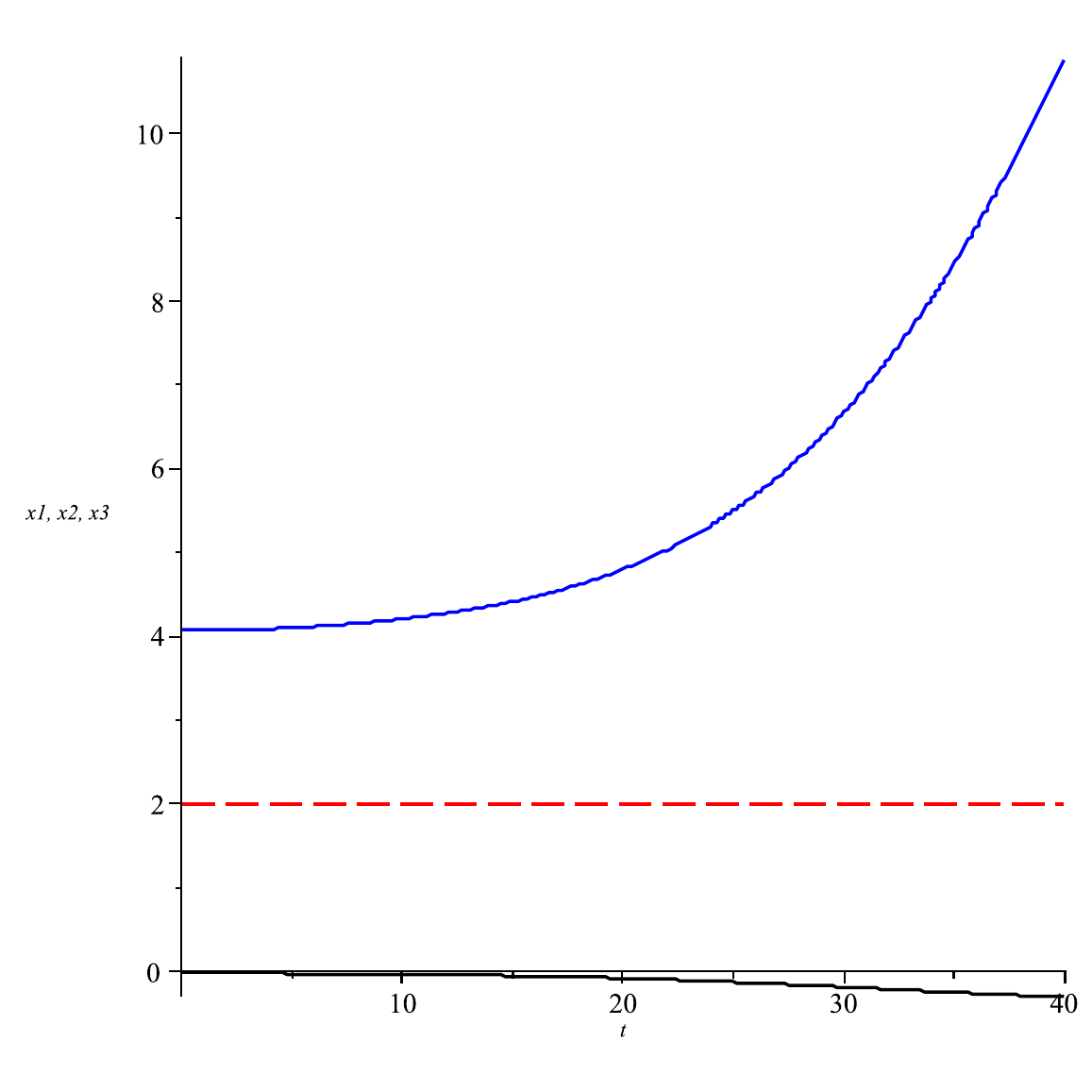}} c) \\
 	\end{minipage}
 	\vfill
 	\caption{Transient behavior of the closed-loop system \eqref{cloloo0} with different initial conditions: a) at the unstable equilibrium point $\chi^u(0)=(4,2,0)$; b) at $\chi^u(0)=(3.9,2,0)$; c) at $\chi^u(0)=(4.1,2,0)$. $\chi_1$--blue, $\chi_2$--dotted red and $\chi_3$--black.}
 	\label{fig2}
 \end{figure}    

\subsection{PI control with $d_1>0$: stability/instability of the equilibria}
\lab{subsec32}
%
In this subsection we do the (local) stability analysis of equilibrium when $d_1>0$ and present some simulations that illustrate the claims.
\subsubsection{Stability/Instability of the equilibrium}
\lab{subsubsec321}
%
\begpro
\lab{pro2}
Consider the system \eqref{sys} with $d_1>0$, verifying the equilibrium existence condition \eqref{equcon}, in closed-loop with the {\em PI control} \eqref{pi}. 
\begenu[{\bf C1}]
\item The closed-loop system has {\em two equilibrium} points, consistent with $\bar \chi_2=y_\star $, denoted 
\begalis{
\bar \chi^u:=(\bar \chi^u_1,y_\star ,\bar \chi^u_3) \in \rea^3\\
\bar \chi^s:=(\bar \chi^s_1,y_\star ,\bar \chi^s_3) \in \rea^3,
}
whose explicit expressions are given in \eqref{equequ} in the the proof below.
\item $\bar \chi^u$ is {\em unstable} for all choices of the tuning gains.
    
\item $\bar \chi^s$ is {\em asymptotically stable} provided the following inequalities hold:\footnote{As seen in the proof there are many other options of the various gains that will ensure asymptotic stability. The one given in {\eqref{inega}} is just one particular example, which is easy to satisfy becasue $d_1$ is, in general, a small number.}  
    \begin{subequations}
    \label{inega}
\begin{align}
     \label{inegap}
	K_P \geq {1 \over 2} d_1^2,  \\
    \label{inegai}
	 K_I \geq {5 \over 16}{ d_1 \over y^2_\star}.
\end{align}
\end{subequations}

\item  A domain of attraction associated to  $\bar \chi^s$ is given by 
$$
\cald:=\{\chi \in \rea^3\;|\;\chi^\top P \chi \leq \rho_\cald,\;\rho_\cald>0\},
$$
with $P \in \rea^{n \times n}$ a positive definite matrix satisfying the algebraic Lyapunov equation \eqref{lyaequ} for any positive definite matrix $Q \in \rea^{2 \times 2}$ and $\rho_\cald$ is a sufficiently small constant.
\endenu

\endpro
\begin{proof}
 The closed-loop dynamics takes the form
\begequ
\lab{cloloo}
\dot \chi=\begmat{-d_1 \chi_1+1-u_0 \chi_2 -K_I \chi_2 \chi_3-K_P\chi_2y_\star +K_P \chi^2_2\\ -d_2 \chi_2 + u_0 \chi_1+K_I \chi_1 \chi_3+K_P\chi_1 y_\star -K_P\chi_1\chi_2\\-\chi_2+y_\star  }=:f(\chi).
\endequ
Claim {\bf C1} follows from the definition of the assignable equilibrium set \eqref{assequ} leading to
\begali{
\lab{equequ}
\bar \chi^u &:=\begmat{{1 \over 2d_1}\Big(1 - \sqrt{1 -4d_1d_2y_\star ^2}\Big)\\ y_\star\\{1 \over K_I y_\star }\Big(1 -d_1 \bar \chi^u_1 - u_0 y_\star\Big)},\;\bar \chi^s :=\begmat{{1 \over 2d_1}\Big(1 + \sqrt{1 -4d_1d_2y_\star ^2}\Big)\\ y_\star\\{1 \over K_I y_\star }\Big(1 -d_1 \bar \chi^s_1 - u_0 y_\star\Big)},
}
with the associated constant control values
\begalis{
\bar u^u&:={-1 \over (\bar \chi^u_1)^2+y_\star ^2}[(d_1 - d_2)\bar \chi^u_1-1]y_\star \\
\bar u^s&:={-1 \over (\bar \chi^s_1)^2+y_\star ^2}[(d_1 - d_2)\bar \chi^s_1-1]y_\star .
}

Invoking Lyapunov's Indirect Method \cite[Theorem 4.7]{KHAbook}, we will prove the claims {\bf C2} and {\bf C3} of {Proposition} \ref{pro2}. Towards this end, we compute the Jacobian of the vector field $f(\chi)$ as
$$
\nabla f(\chi)=\begmat{-d_1 & -u_0 -K_I \chi_3 -K_Py_\star +2K_P \chi_2 & -K_I \chi_2 \\ u_0 + K_I \chi_3 + K_P y_\star - K_P \chi_2  & -d_2 -K_P \chi_1 & K_I \chi_1 \\ 0 & -1 & 0}.
$$
Evaluating the Jacobian at the equilibrium $\bar \chi=\col(\bar \chi_1,y_\star , \bar \chi_3)$ yields
$$
\nabla f(\bar \chi)=\begmat{-d_1 & -u_0  -K_I \bar \chi_3 + K_P y_\star  & -K_I y_\star  \\ u_0 + K_I \bar \chi_3& -d_2  -K_P \bar \chi_1& K_I  \bar \chi_1\\ 0 & -1 & 0}.
$$
The characteristic polynomial of this matrix is of the form $p(\lambda):=\lambda^3+a_2 \lambda^2+a_1 \lambda + a_0$, with 
\begalis{
& a_0:=- K_I(K_I \bar \chi_3 y_\star  - d_1 \bar \chi_1 + u_0 y_\star )\\
& a_1:= K_I^2\bar \chi_3^2+( 2u_0 \bar \chi_3 + \bar \chi_1-K_P \bar \chi_3 y_\star  )K_I + (d_1 \bar \chi_1 - u_0 y_\star )K_P + d_1d_2 + u_0^2\\
& a_2:= K_P \bar \chi_1 + d_1 + d_2.
}  

As indicated in the proof of Proposition \ref{pro1}, the inequalities \eqref{stacon} are the necessary and sufficient condition for (local) stability of the equilibria.  In the sequel we will verify these conditions for the two different equilibrium points of the closed-loop system.\\

\noindent {\bf Case of minimal current} The minimal current equilibrium is associated with the unstable equilibrium and is given as
\begequ
\lab{bchi1m}
\bar \chi_1^u:={1 \over d_1}(\hal - r),
\endequ
where for ease of future reference we have defined the constant
$$
r:={1 \over 2}\sqrt{1 -4d_1d_2y_\star ^2} > 0.
$$
The corresponding equilibrium for $\bar \chi^u_3$ is then
\begequ
\lab{bchi3m}
\bar \chi^u_3= {1 \over K_I y_\star }\Big(\hal + r - u_0 y_\star \Big).
\endequ
After some algebraic operations we can evaluate $a_0$ with \eqref{bchi1m} and \eqref{bchi3m} getting
$$
a_0=-2K_Ir,
$$
which is always negative, proving claim {\bf C2}.\\

\noindent {\bf Case of maximal current} We repeat here the same steps above for the maximal current equilibrium. The maximal current equilibrium and its associated component for $\bar \chi_3^s$ are given as\footnote{Notice that the minimal and maximal current equilibria differ only on the sign of the constant $r$.} 
\begali{
\nonumber
\bar \chi_1^s&:={1 \over d_1}(\hal + r)\\
\lab{chis}
\bar \chi^s_3 &= {1 \over K_I y_\star }\Big(\hal - r - u_0 y_\star \Big).
}
Some lenghty, but straightforward calculations with these two equations yields
\begalis{ 
a_0 &=2K_Ir\\
a_1&={1 \over 4d_1 y_\star ^2}\Big(4 d_1^2 d_2 y_\star ^2+8rK_Py_\star ^2d_1+4\Big(r-\hal\Big)^2d_1+4K_Iy_\star ^2(\hal+r)   \Big)\\        
a_2&={K_P \over d_1}\Big(\hal +r\Big)+d_1+d_2.
}
which are all positive. It remains only to prove that 
\begequ
\lab{a1a2mina0}
a_1a_2-a_0>0.
\endequ
In Appendix \ref{appa} we present the {\tt Maple} calculations that compute \eqref{a1a2mina0} multiplied by a positive factor, denoted $f_1$. From these expressions we identify three polynomials that recover all the negative terms appearing in $f_1$, namley: 
\begalis{
p_1(r)&:=d_1^3(8r^2-8r+2) \\
p_2(r)&:=d_2d_1^2(8r^2-8r+2)+d_1^2(-8K_I y_{\star}^2r+4K_I y_{\star}^2)\\
p_3(r)&:=8K_Pd_1r(r^2-{1 \over 2}r - {1 \over 4})
}

Applying to these polynomials the following equivalence
$$
p(r):=r^2+a r +b \geq 0 \quad \Longleftrightarrow \quad 4b \geq a^2,
$$
we conclude that $p_1(r)$ is always nonegative. Regarding $p_2(r)$, note that its first term on the right-hand side is clearly nonegative,  and its second term is larger equal to $-4K_Iy_{\star}^2d_1^2r^2$, which yields
\begin{align}
p_2(r) &\geq d_2d_1^2(8r^2-8r+2)+4K_Iy_{\star}^2d_1^2(r-1)^2-4K_Iy_{\star}^2d_1^2 r^2 \nonumber \\
&\geq -4K_Iy_{\star}^2d_1^2 r^2 \nonumber.
\end{align}
To dominate this term we take the term $8K_PK_Iy_\star ^2r^2$, which appears in the last right hand side product of $f_1$, thus ensuring the inequality \eqref{inegap} holds. 

Regarding $p_3(r)$ it is easy to show that the second order polynomial is larger equal to $-{5 \over 16}$, hence we have
$$
p_3(r) \geq -{5 \over 2} K_Pd_1r.
$$
To dominate this term we take the term $8K_PK_Iy_\star ^2r$, which appears in the last right hand side product of $f_1$, provided that the inequality \eqref{inegai} holds. Hence, the claim {\bf C3} is proved.

To prove the claim {\bf C4} we define the Lyapunov equation
\begequ
\lab{lyaequ}
A^\top P + P A =-Q,
\endequ
where we defined the matrix 
$$
A:=\nabla f(\chi^^s),
$$
and recall that the positive definite function $V(\delta \chi):=\delta \chi^\top P \delta \chi$ qualifies as a Lypunov function for the linearized system 
$$
{\stackrel{\dot {\frown}} {\delta \chi}}=A \delta \chi.
$$
\end{proof}
\subsubsection{Simulation results}
\lab{subsubsec322}
%
In Figs. {\ref{fig3} and \ref{fig4}} we present simulations of the closed-loop system \eqref{cloloo} with different initial conditions for the following parameters:
$$
y_\star =1,d_1= {1 \over 4},d_2={3 \over 4},K_I=1,K_P=2,u_0=\hal.
$$
First, we set the initial conditions at (and close) the unstable equilibrium $\bar \chi^u$ associated to the minimum current. Then, we tune the gains to ensure the equilibrium associated to the maximal current is stable verifying \eqref{a1a2mina0} and, again, set the initial conditions at (and close) the stable equilibrium $\bar \chi^s$. 

  \begin{figure}
 	\begin{minipage}[h]{0.4\linewidth}
 		\center{\includegraphics[width=1\linewidth]{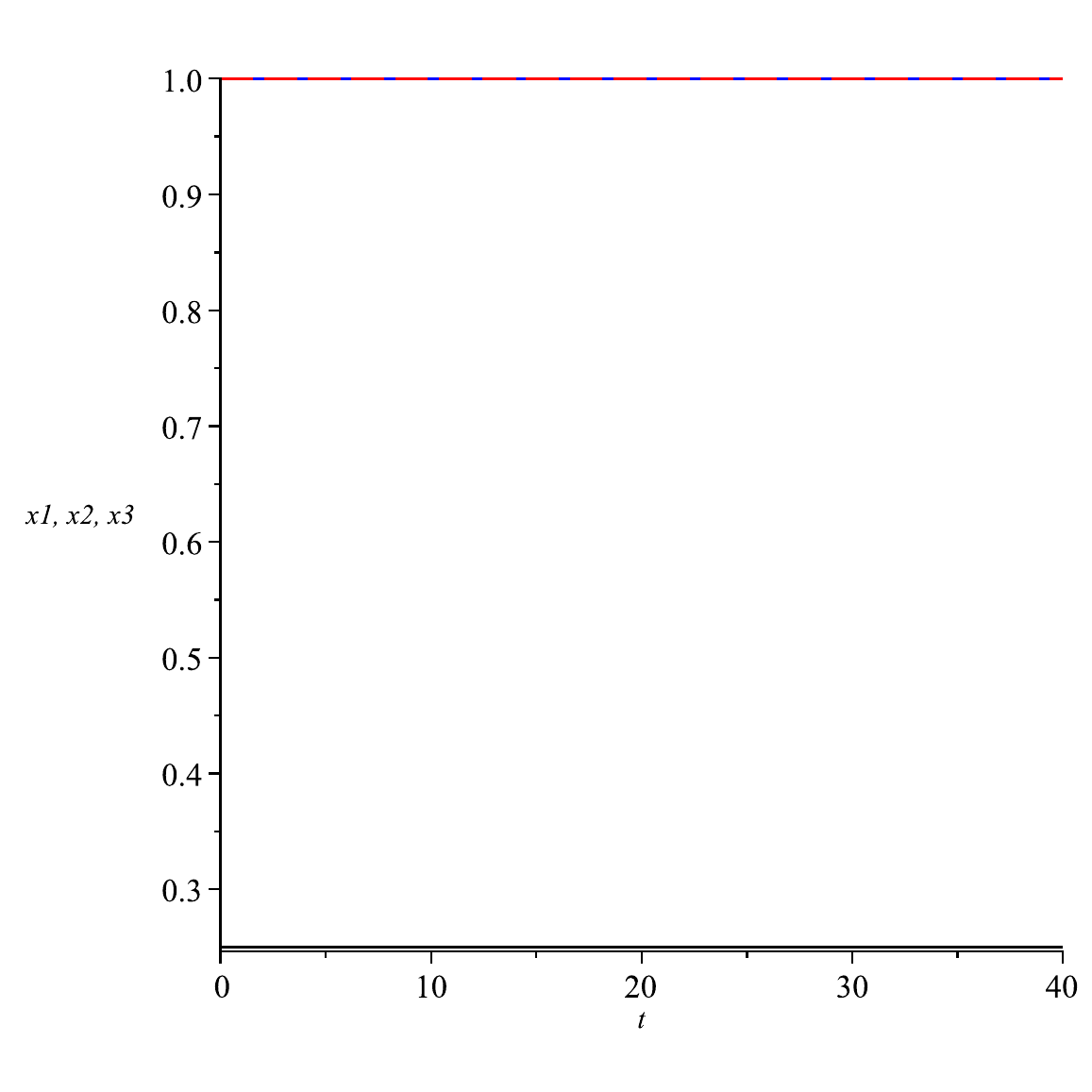}} a) \\
 	\end{minipage}
 	\hfill
 	\begin{minipage}[h]{0.4\linewidth}
 		\center{\includegraphics[width=1\linewidth]{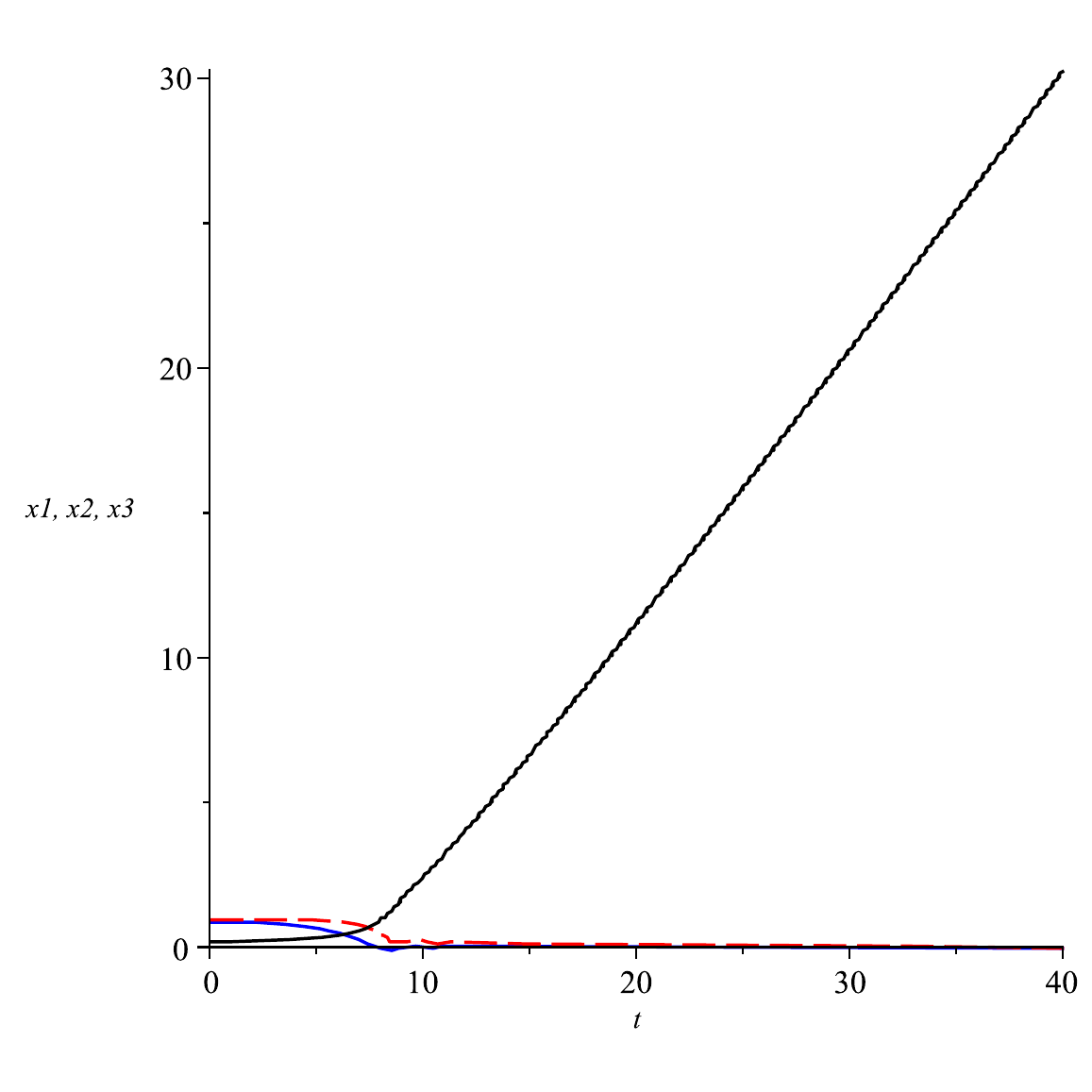}} \\b)
 	\end{minipage}
 	\vfill
 	\begin{minipage}[h]{1\linewidth}
 		\center{\includegraphics[width=0.4\linewidth]{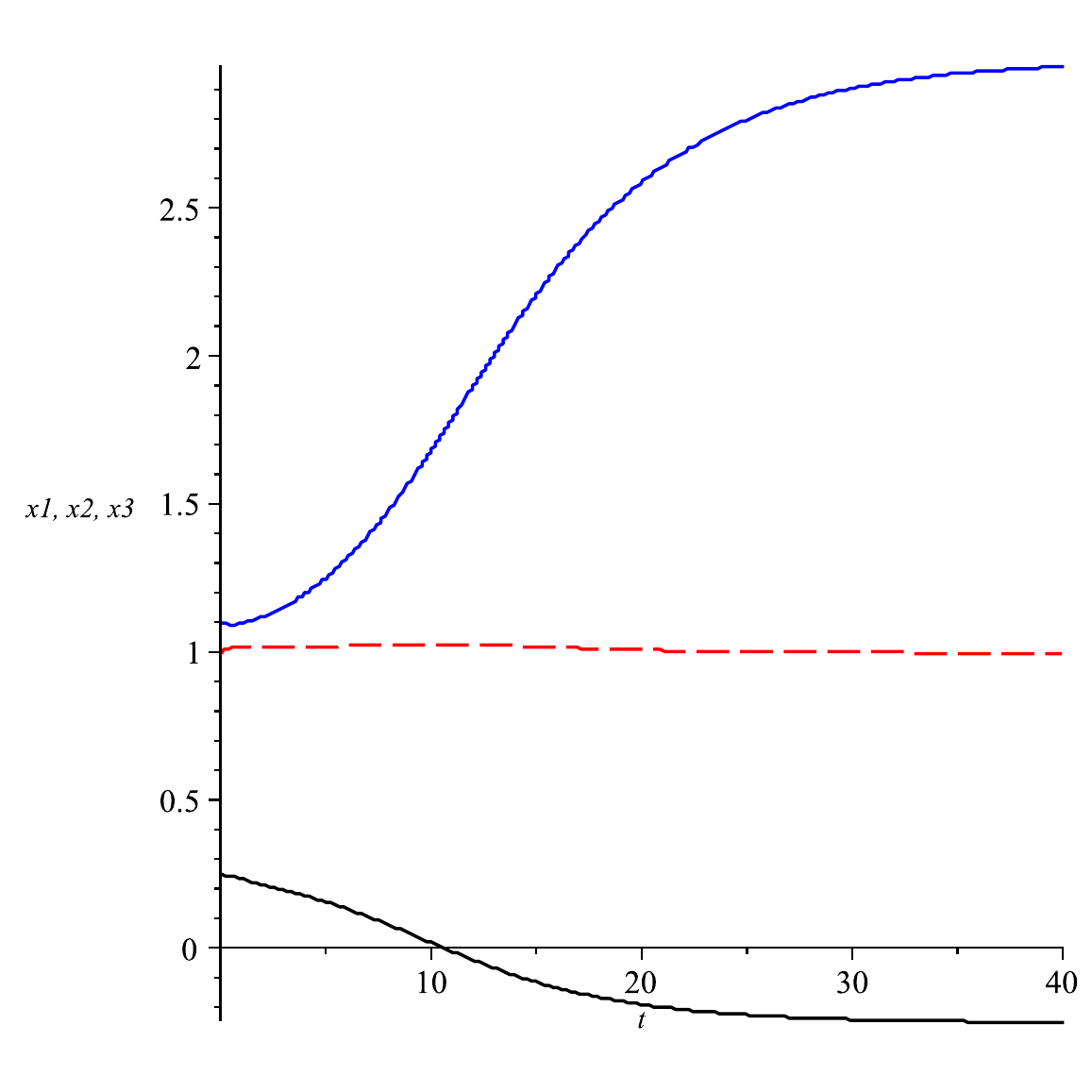}} c) \\
 	\end{minipage}
 	\vfill
 	\caption{Transient behavior of the closed-loop system \eqref{cloloo} with different initial conditions: a) at the unstable equilibrium point $\chi^u(0)=(1,1,{1 \over 4})$; b) at $\chi^u(0)=(0.9,1,{1 \over 4})$; c) at $\chi^u(0)=(1.1,1,{1 \over 4})$.  $\chi_1$--blue, $\chi_2$--dotted red and $\chi_3$--black.}
 	\label{fig3}
 \end{figure}
 
  \begin{figure}
 	\begin{minipage}[h]{0.4\linewidth}
 		\center{\includegraphics[width=1\linewidth]{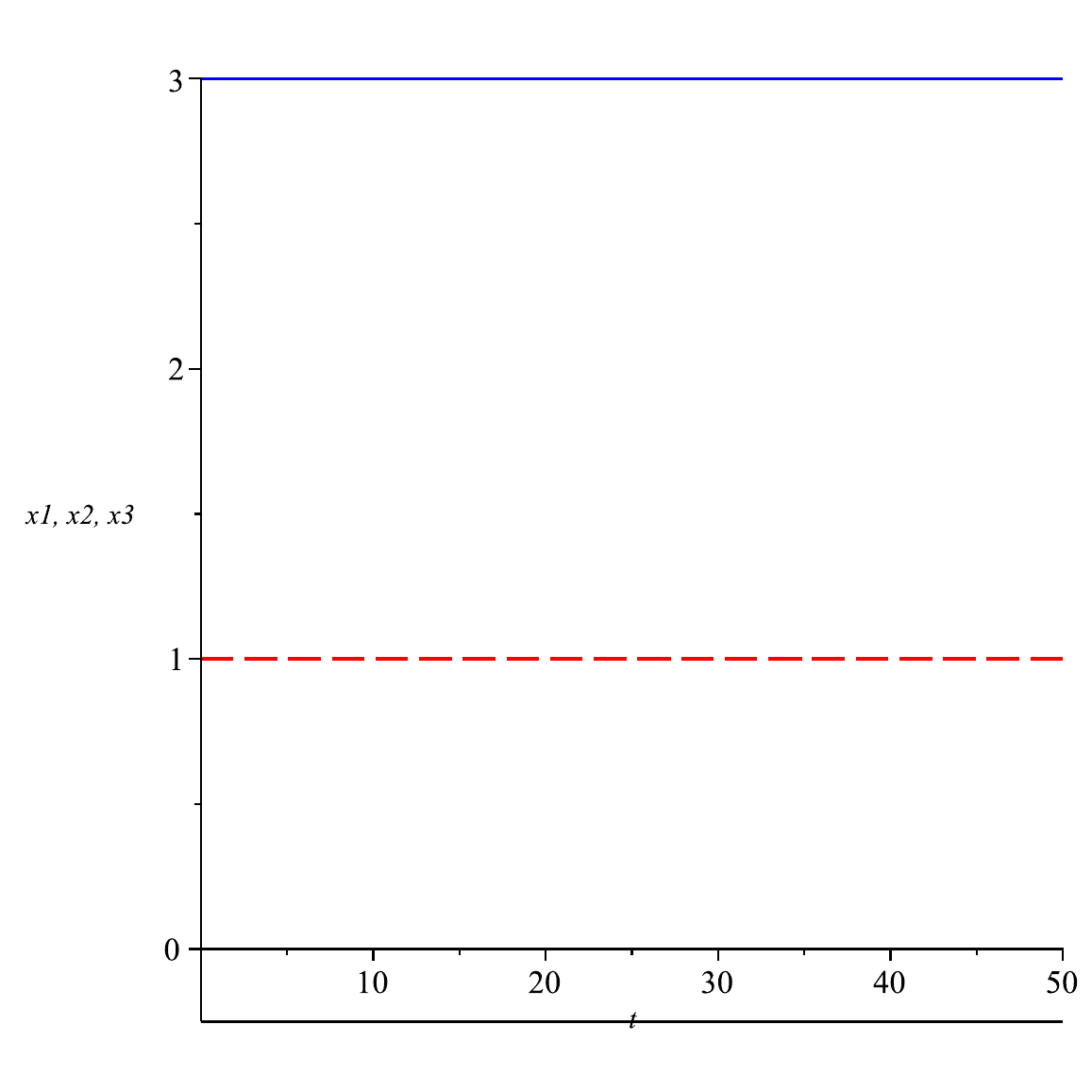}} a) \\
 	\end{minipage}
 	\hfill
 	\begin{minipage}[h]{0.4\linewidth}
 		\center{\includegraphics[width=1\linewidth]{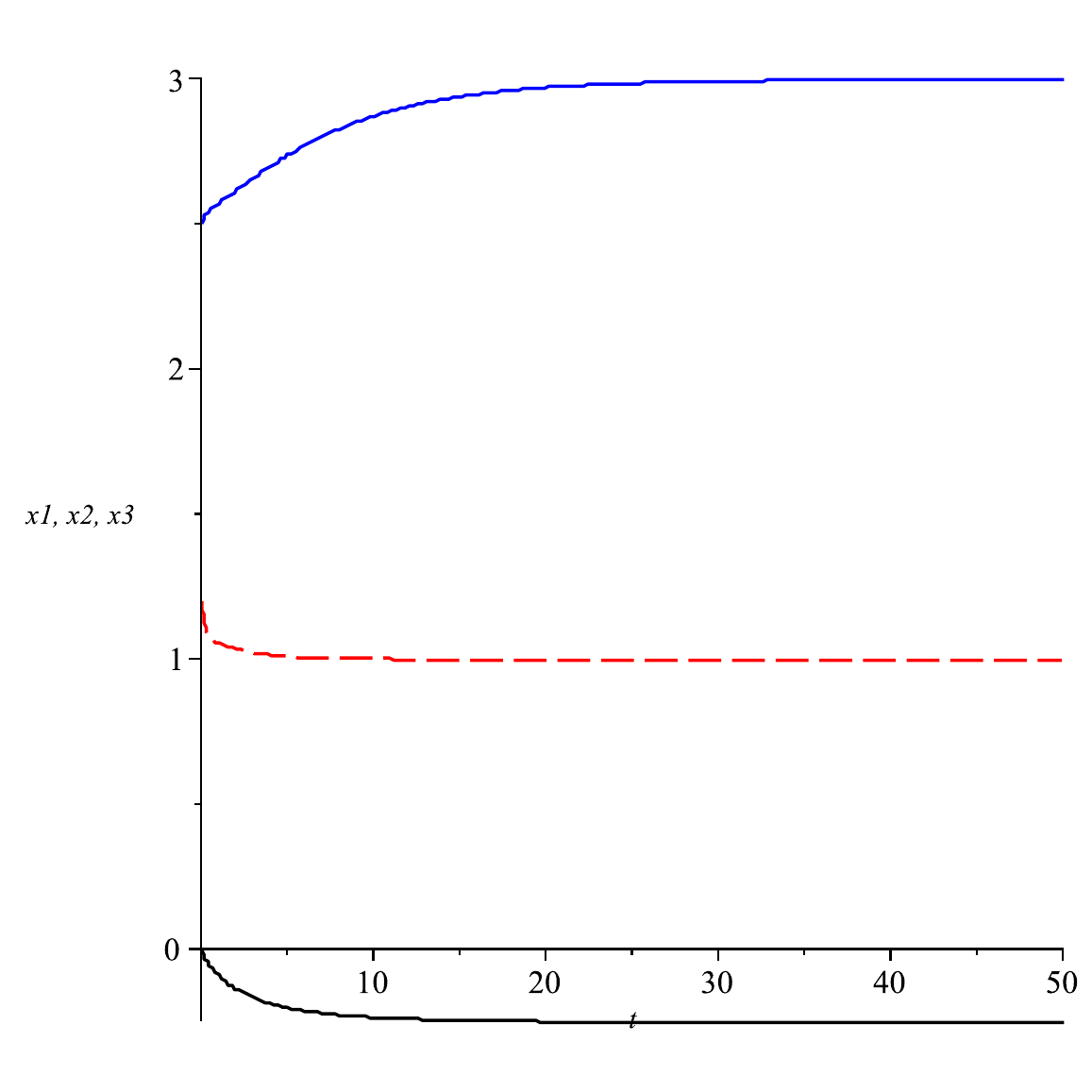}} \\b)
 	\end{minipage}
 	\vfill
 	\begin{minipage}[h]{1\linewidth}
 		\center{\includegraphics[width=0.4\linewidth]{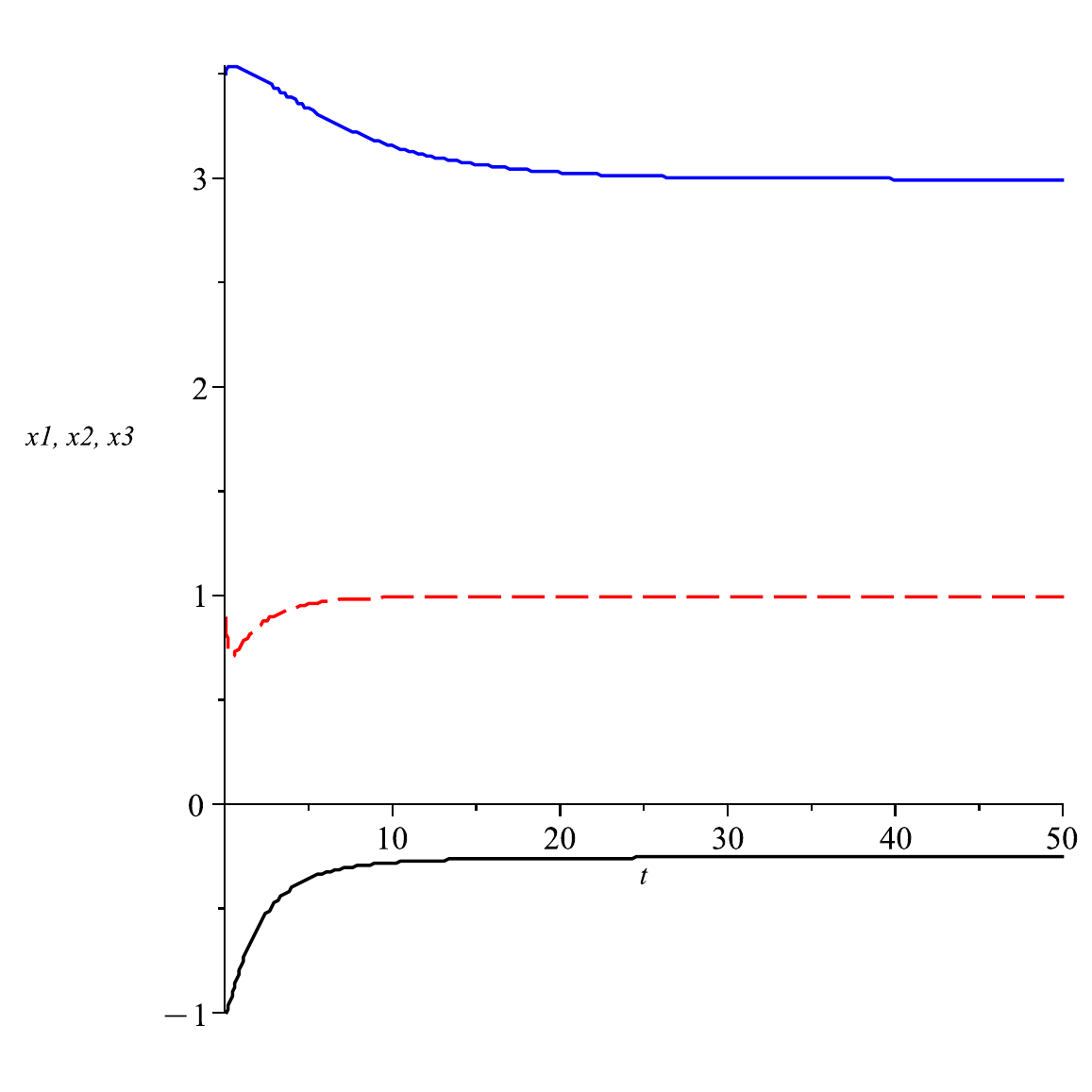}} c) \\
 	\end{minipage}
 	\vfill
 	\caption{Transient behavior of the closed-loop system \eqref{cloloo} verifying \eqref{a1a2mina0} with different initial conditions: a) at the stable equilibrium point $\chi^s(0)=(3,1,-{1 \over 4})$; b) at $\chi^s(0)=(2.5,1.2,0)$; c) at $\chi^s(0)=(3.5,0.9,-1)$. $\chi_1$--blue, $\chi_2$--dotted red and $\chi_3$--black.}
 	\label{fig4}
 \end{figure}
As shown in Figs. \ref{fig3} and \ref{fig4}, the instability of $\chi^u$ and the stability of $\chi^s$ are evident, thereby corroborating Proposition 2.

Similarly to the case $d_1=0$, the surface 
$$
\cals_u:=\{(\chi_1,\chi_2)\;|\;\chi_1-d_1 \chi_1^2 -d_2 \chi_2^2=0\},
$$
which clearly contains the assignable equilibrium point given in \eqref{assequ}, defines a separatrix with the same  properties as $\cals_0$. Again,  this surface precisely coincides with the place where the derivative of the systems energy $H(x)$ equal zero.
%
\section{Voltage Feedback Passivity-based Control}
\lab{sec4}
%
In this section we recall three voltage feedback controllers which ensure global regulation of the voltage signal. All three of them are designed following the approach of PBC \cite{ORTetalcsm}, which has had enormous success in applications. The first two schemes are nonlinear voltage-feedback static controllers, and have been reported in  \cite{RODetal} and \cite{ZHAetal}, respectively. They both belong to the class of Interconnection and Damping Assignment (IDA)-PBC first reported in \cite{ORTetalaut02,ORTGAR}. The third one is a PID-PBC, first reported in \cite{HERetal} and further developed in \cite{BOBetal,ORTetalbookpid}.
\subsection{Two nonlinear output feedback controllers}
\lab{subsec41}
%
We give below, without proof, the propositions pertaining to the two static controllers reported in  \cite{RODetal} and \cite{ZHAetal}, respectively. The proofs may be found in the corresponding reference.

\begpro \em
\lab{pro3}
 \cite{RODetal} Consider the Boost converter model \eqref{sys} with $d_1=0$ in closed-loop with the IDA-PBC
\begequ
\lab{conhug}
u={1 \over y_\star }\Big({x_2 \over y_\star }\Big)^\alpha,
\endequ
where $\alpha \in (0,1)$ is a tuning gain. Then, the equilibrium point $x_\star :=(d_2 y_\star ^2,y_\star )$ is {\em asymptotically stable}.
\endpro

\begpro \em
\lab{pro4}
\cite{ZHAetal} Consider the Boost converter model \eqref{sys} with $d_1=0$ in closed-loop with the IDA-PBC
\begequ
\lab{conmen}
u=k{x_2 \over x_2^2+(k-1)y_\star ^2},
\endequ
where {$k>3$} is a tuning gain. 
\begenu[{\bf P1}]
\item The equilibrium point $x_\star :=(d_2 y_\star ^2,y_\star )$ is {\em asymptotically stable}.
\item There exists a domain of attraction, defined by a Lyapunov function\footnote{The expression of $P(x)$ is given in \cite{ZHAetal} of the closed-loop dynamics $P(x)$}, that guarantees that if we start in this set, the state $x(t)$ remains in the positive quadrant. 
\endenu
\endpro

\subsection*{Simulation results}
%
In {Figs. \ref{fig5} and \ref{fig6} below} we show some simulation results of the two controllers above.  As seen in the figures both controllers ensure asymptotic stability of the desired equilibrium. Some other properties, pertaining to the performance of the controllers, {\em e.g.}, domain of attraction, speed of convergence, etc, may be found in \cite{RODetal} and \cite{ZHAetal}.

 \begin{figure}
 	\begin{minipage}[h]{0.5\linewidth}
 		\center{\includegraphics[width=1.1\linewidth]{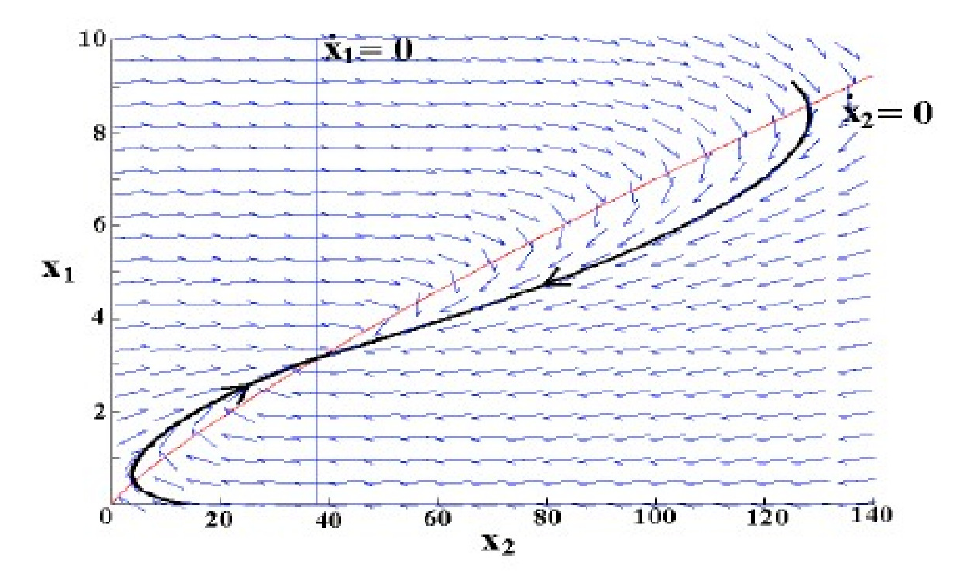}} a) \\
 	\end{minipage}
 	\hfill
 	\begin{minipage}[h]{0.5\linewidth}
 		\center{\includegraphics[width=1\linewidth]{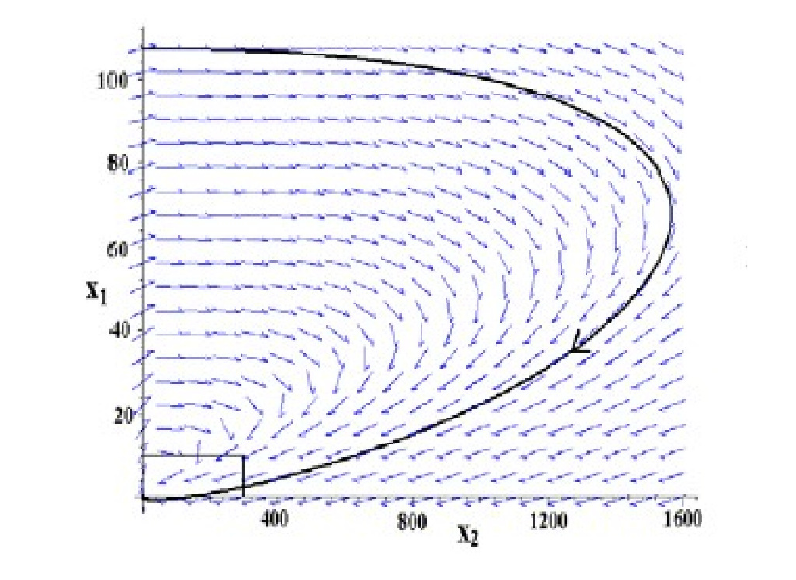}} \\b)
 	\end{minipage}
 	\vfill
 	\caption{a) Phase plane of the converter model in closed-loop with the control \eqref{conhug}, showing one trajectory (in black); b) an enlarged view illustrating the large size of the domain of attraction (see the scales).}
 	\label{fig5}
 \end{figure}
 
 \begin{figure}
 	\begin{minipage}[h]{0.45\linewidth}
 		\center{\includegraphics[width=1.1\linewidth]{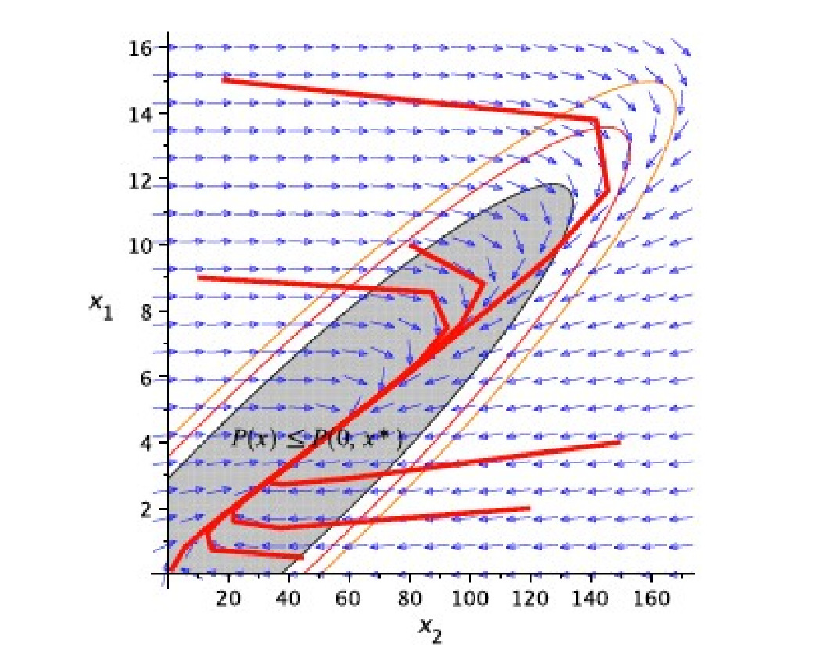}} a) \\
 	\end{minipage}
 	\hfill
 	\begin{minipage}[h]{0.45\linewidth}
 		\center{\includegraphics[width=1.1\linewidth]{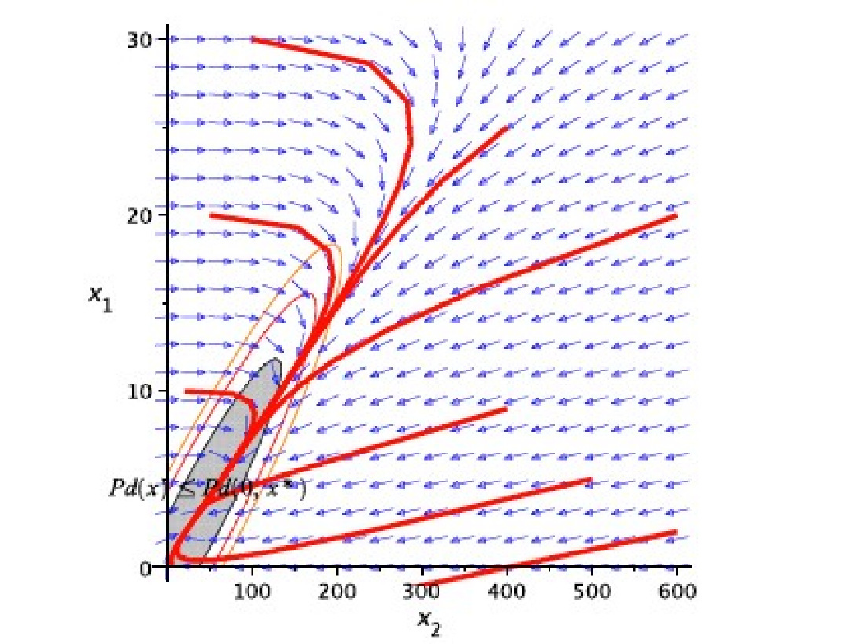}} \\b)
 	\end{minipage}
 	\vfill
 	\caption{a) Phase plane of the converter model in closed-loop with the control \eqref{conmen}, showing some trajectories (in red) and  an estimate of the domain of attraction that---as indicated by the arrows---ensures $x(t) >0$ (in gray); b) An enlarged version. }
 	\label{fig6}
 \end{figure}
 
\begrem
\lab{rem7}
The simplicity of both controllers can hardly be overestimated. Consisting only of some very simple algebraic operations. Also, it should be underscored that the neither one of the controllers require {\em any prior knowledge} about the converter parameters, making them robust with respect to parameter uncertainty. 
\endrem

\begrem
\lab{rem8}
One drawback of the two propositions above is that it is assumed that the parasitic resistance $d_1$ is equal to zero. Current research is under way to extend these results for the case $d_1 \neq 0$. 
\endrem
\subsection{A PID passivity-based controller}
\lab{subsec42}
%
We finish this section presenting the PID-PBC reported in  \cite{BOBetal}. This scheme is an output-feedback version of the original PID-PBC first reported in \cite{HERetal} and further developed in \cite{ORTetalbookpid}. The main contribution of \cite{BOBetal} is that, implementing an adaptive observer, removed the need for the {\em full state} feedback required in \cite{HERetal}. Another interesting feature of the PID-PBC is that it is applicable to a large class of power converters, in contrast to the schemes in \cite{RODetal} and \cite{ZHAetal}, which are developed for the Boost converter only---moreover, for the Boost converter the PID-PBC design does not need to assume $d_1=0$. The prize that is paid for these generalizations is in the {\em complexity} of the implementation that, as will be seen below, is several orders of magnitude bigger than the two static controllers of Propositions \ref{pro3} and \ref{pro4}.

As explained above, the controller in \cite{BOBetal} consists of a state observer and a PID-PBC of the form given in \cite{HERetal,ORTetalbookpid}. To simplify the presentation we split the main stability result in two parts: first, the design of the state observer for the Boost converter model \eqref{sys}. Second, the implementation of a {\em certainty equivalent} version of the PID-PBC of \cite{HERetal,ORTetalbookpid}, where the actual converter current is replaced by its estimate generated by the observer. It should be underscored that the observer has {\em finite convergence time} (FCT) properties, hence the good behavior of the certainty equivalent PID-PBC is guaranteed. 

In the proposition below we propose a state observer that ensures FCT of the estimated state, under the following extremely  weak  interval excitation assumption \cite{KRERIE} on a scalar signal $\Delta$, which is defined below.
\begin{assumption}
\label{ass1}
{Fix a small constant $\mu \in (0, 1)$ and a positive number $\gamma$. There is a time $t_c \in (0,\infty)$ such that}
\begequ
\lab{sufexccon}
\int^{t_c}_0 \Delta^2(\tau)d\tau\geq-\frac{1}{\gamma}\ln(1-\mu).
\endequ
\end{assumption}

\begpro 
\lab{pro5}
\cite{BOBetal} Consider the Boost converter model \eqref{phsys} with measurable output the voltage
$$
y=C x = [0\;1]x=x_2.
$$
Define the observer via
\begsubequ
\lab{gpebo}
\begali{
\dot \xi & =F(u)\xi+\begmat{1 \\ 0} \label{dotxi}\\
\dot \Phi &=F(u)\Phi,\;\Phi(0)=I_2 \lab{dotphi}\\
\dot Y&=-\lambda Y + \lambda \Phi^\top  C^\top (C\xi-y) \lab{doty}\\
\dot \Omega&=-\lambda \Omega + \lambda \Phi^\top  C^\top  C\Phi \lab{dotome}\\
\dot\omega &=-\gamma\Delta^2\omega,\; \omega(0)=1 \lab{dotw}\\
\dot {\hat \theta}&=\gamma\Delta(\mathcal{Y}-\Delta\hat \theta),\label{dothatthe}
}
\endsubequ
with $F(u)$ given in \eqref{f}, $\lambda>0,\; \gamma>0$, free tuning parameters and
\begali{
\mathcal{Y}:=\text{adj}\{\Omega\}Y,\;\Delta :=\det\{\Omega\},
\lab{calydel}
}
where $\adj\{\cdot\}$ is the adjunct  matrix. The state estimate
\begequ
\lab{hatxfct}
\hat x=\xi+\Phi\hat\theta_{\tt FCT}
\endequ
where we introduced the vector
\begequ
\lab{hatthefct}
\hat\theta_{\tt FCT}:=\frac{1}{1-\omega_c}\left[\hat \theta-\omega_c\hat\theta(0)\right],
\endequ
and $\omega_c$ is defined via the clipping function
\begin{align}\nonumber
\omega_c=
\begin{cases}
\omega~~~~~~~\text{if}~~\omega \leq 1-\mu,\\
1-\mu~~\text{if}~~\omega > 1-\mu.
\end{cases}
\end{align}
ensures that, for all initial conditions $(x(0),\xi(0),Y(0),\Omega(0),\hat \theta(0)) \in  \rea^2 \times \rea^{2} \times \rea^{2} \times \rea^{2 \times 2} \times \rea^{2}$, we have that
\begequ
\lab{hatequx}
\hat x(t)=x(t),~~\forall t>t_c
\endequ
with all signals bounded provided $\Delta$ verifies Assumption \ref{ass1}.
\end{proposition}

As thoroughly discussed in  \cite{HERetal,ORTetalbookpid} the key step in the implementation of the PI-PBC is the generation of an output signal that is {\em shifted passive} \cite[Section 4.2.1]{ORTetalbookpid}. This property combined with the well-known passivity of the PI ensures stability of the closed-loop. Equipped with the FCT state observer of Proposition \ref{pro5} we can implement a (certainty equivalent) voltage-feedback PID-PBC as follows.

\begpro \em
\lab{pro6}
\cite{BOBetal} Consider the Boost converter model \eqref{phsys} in closed-loop with the PID-PBC
\begali{
\nonumber
\dot x_c&= y_{PI}\\
u&= -K_P y_{PI}-K_I x_c,
\lab{pidpbc}
}
where 
$$
y_{PI}=x_1^\star  x_2 - y_\star \hat x_1,
$$
with $\hat x_1(t) \in \rea$ the estimate of $x_1$ generated in Proposition \ref{pro5}. Then, the equilibrium point $(\bar x_1,\bar x_2,\bar x_c):=(d_2 y_\star^2,y_\star,0)$ is {\em asymptotically stable}.
\endpro
%
\section{Concluding Remarks and Future Research}
\lab{sec5}
%
We have carried out a detailed analysis of the behavior of the Boost converter under the action of {\em voltage output} controllers. We have considered the classical PI control and three different classes of PBC. We have proven the important role of the inductor parasitic resistance ($R$ in \eqref{boost}) showing, in particular, that the PI controller is {\em always} unstable if $R=0$. In the alternative case that $R >0$ we proved that the dynamics of the PI control exhibits two equilibria, one of them {\em always} unstable and the other one stable for {\em sufficiently large} PI gains. But it is important to note that the equilibrium of practical interest (the one with the lowest inductor current) corresponds to the unstable case.

Regarding the PBC controllers, we presented two very simple nonlinear static voltage feedback controllers, exhibiting {\em excellent} dynamic performance. Both schemes are restricted to the particular case when $R=0$. Current research is under way to extend the results of the PBC to consider $R>0$. Two additional research lines we are pursuing for the PBCs are, on one hand, to extend their application to other DC-to-DC converter topologies, {\em e.g.}, Buck-Boost or \v{C}uk and to consider other type of loads, for instance, {\em constant power loads or a combination of resistive and constant power loads.} Finally, we expect to be able to carry out experimental work of the proposed controller topologies in the near future.      
%

\appendix
\section{{\tt Maple} Computation of the Term \aTerm}
\lab{appa}
\vspace{1em} 

\begin{center}
\includegraphics[scale=1]{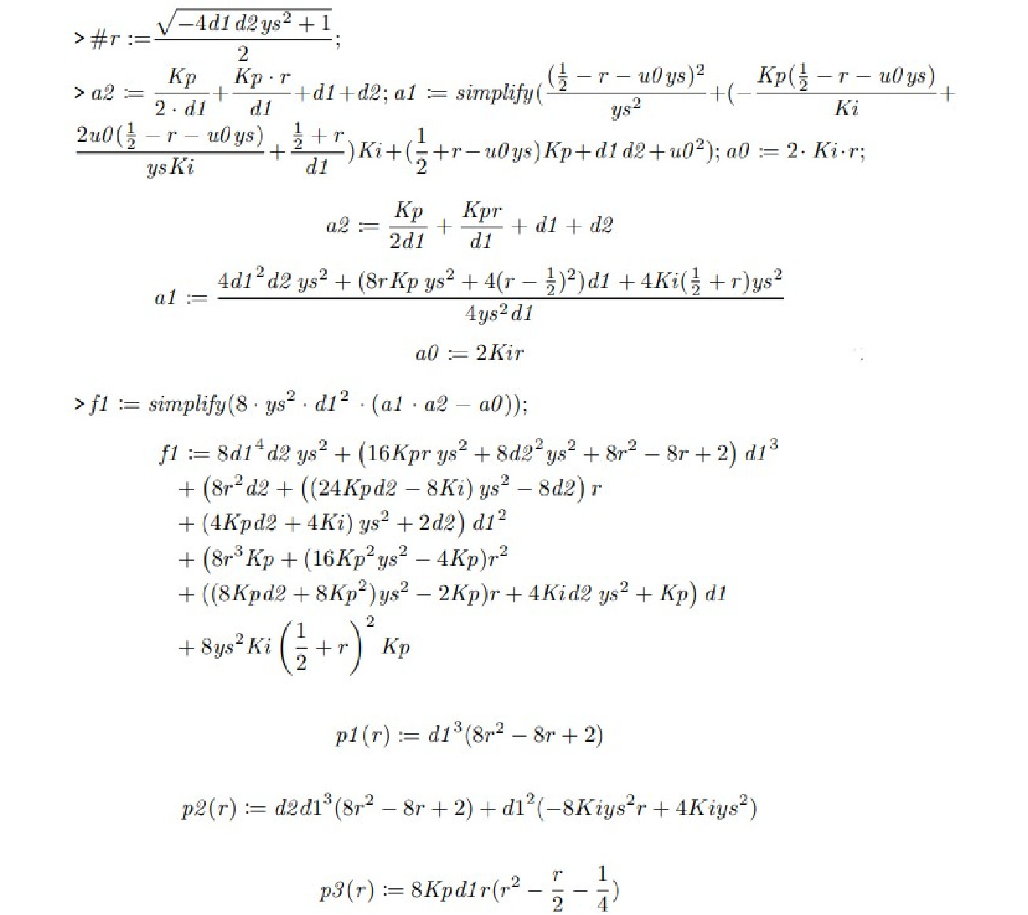}
\end{center}
\end{document}